\documentclass[11pt]{amsart}
%
\usepackage{latexsym}
\usepackage{graphics}
\usepackage{a4wide,amssymb}
\usepackage{amsfonts,amsmath}
\usepackage{eucal}
\usepackage{enumerate}

 \newcommand{\nn}{\nonumber}
\newcommand{\R}{{\mathbb R}}

\newcommand{\var}{\varepsilon}

\newcommand{\pa}{\partial}

\let\e=\varepsilon

\let \.=\cdot

\newtheorem{remark}{Remark}
\newtheorem{lemma}{Lemma}
\newtheorem{theorem}{Theorem}


%
\title{From particle systems to the Landau equation: a consistency result}
\date{\today}

\author{A. V. Boblylev}
\address{Department of Mathematics, Karlstad University, SE- 65188 Karlstad, Sweden}
\email{alexander.bobylev@kau.se}

\author{M. Pulvirenti}
\address{Dipartimento di
Matematica, Universit\`a La Sapienza - Roma, P.le A. Moro, 5 00185 Roma, Italy}
\email{pulvirenti@mat.uniroma1.it}

\author{C. Saffirio}
\address{Dipartimento di
Matematica, Universit\`a La Sapienza - Roma, P.le A. Moro, 5 00185 Roma, Italy}
\email{saffirio@mat.uniroma1.it}

\begin{document}
\maketitle

\begin{abstract}
We consider a  system of $N$ classical particles,  interacting via a smooth, short-range potential, in a weak-coupling regime.  This means that $N$ tends to infinity when the interaction is suitably rescaled. The $j$-particle marginals,  which obey to the usual BBGKY hierarchy, are decomposed into two contributions: one small but strongly oscillating, the other hopefully smooth. Eliminating the first, we arrive to establish the dynamical problem in term of a new hierarchy (for the smooth part) involving a memory term. We show that the first order correction to the free flow converges, as $N\to \infty$, to the corresponding term associated to the Landau equation. We also show the related propagation of chaos. 
\end{abstract}
%


\section{Introduction}
\label{intro}
Lev Landau in 1936 proposed a kinetic equation, usually called Fokker-Planck-Landau equation
(simply Landau equation in the sequel) which is a diffusion with friction in velocity,  suitable to describe the behavior
of a  weakly interacting  gas, in particular a Coulomb gas in a regime where
the grazing collisions are dominant.

Roughly speaking the Landau's argument was to take the Boltzmann equation with Coulomb cross-section and
(cutting-off short and long distances) apply  the Taylor expansion to the collision operator.
The result is a degenerate elliptic operator acting on  the velocity space (see \cite {LL} and  the original publication of Landau \cite{L}).
The full Taylor expansion of the Boltzmann collision integral for arbitrary intermolecular forces was studied in \cite{Bo-76} and a formal generalization of Landau collision integral to arbitrary scattering cross-section was proposed there. A more precise asymptotics in the Coulomb case was also studied in \cite{DL}.

The Landau equation for the one particle distribution $f(x,v,t)$, where $x\in\R^3$, $v\in\R^3$ and $t\in\R_+$ denote  position, velocity and time respectively, reads as
\begin{equation}
\label{landau}
(\pa_t+v\cdot \nabla_x)f=Q_L(f,f)
\end{equation}
with the collision operator $Q_L$ given by:
\begin{equation}
\label{QL}
Q_L(f,f)(v)= \int dv_1 \nabla_{v} \left[ a
(v-v_1)
\; (\nabla_{v}-
\nabla_{v_1})\;f(v)f(v_1)\right].
\end{equation}
Here $x$ plays the role of a parameter and hence its dependence is
omitted.
Moreover the matrix $a(w) $ has the form

\begin{equation}
\label{matrice_a}
a(w)=\frac A{|w|} \frac {(|w|^2 Id-w \otimes w)} {|w|^2},
\end{equation}
where $A>0$ is a suitable constant.

Note that the Landau equation possesses all the properties known for the Boltzmann equation, namely
the mass, momentum and energy conservation and the H-theorem.
Actually the   homogeneous  Landau equation
can be rigorously derived in the grazing collision limit of the homogeneous Boltzmann equation
by a suitable rescaling of the cross-section.

In particular,
in \cite{AB} the authors show that, under suitable assumptions on the
cross--section, the diffusion Landau equation
\eqref{landau} can indeed be derived. The
diffusion operator is the form  \eqref{QL} but
with a matrix $a$ replaced by
$$
\alpha(|w|)\frac{(|w|^2 Id-w \otimes w)}{|w|^2},
$$
with $\alpha$ a smooth function.
Next in \cite{Gou} and \cite{Vi1} steps forward were performed
to arrive to cover the case $\alpha(|w|) \approx \tfrac 1 {|w|^\nu}$
for small $|w|$, with $\nu<1$.

The case of the matrix \eqref {matrice_a} was treated in \cite {Vi2}.
It is worth to underline that the initial value problem
 for the homogeneous Landau equation
is strongly simplified for the case $\alpha(|w|) \approx \tfrac 1 {|w|^\nu}$,
with $\nu<1$ (see \cite{DV} and \cite {DV1}),
while for the matrix \eqref{matrice_a} we have a weak existence theorem
obtained by compactness arguments based on the entropy production control
\cite {Vi2}.  Moreover, for the inhomogeneous case, we have existence and uniqueness of strong solutions 
for data sufficiently close to a Maxwellian \cite {Guo}. This is the only existence and uniqueness result we are aware.

A natural question is to see whether the Landau equation can be  directly derived, under a suitable scaling limit,  from
a particle system as it is the case of  the Boltzmann equation.
In fact one can see  (\cite {Bal}, see also \cite {Sp1} and \cite {P}), at a formal level, that the Landau equation is
expected to be valid for a weakly interacting  dense gas. The precise statement and scaling (called weak-coupling limit) will be
presented and discussed  in the next Section.
The formal analysis gives indeed the Landau equation \eqref{landau} with matrix \eqref{matrice_a}.
The two-body interaction potential $\phi$ is assumed smooth, spherically
symmetric,  and the constant $A$ is given by:
\begin{equation}
\label{A}
A=\frac 1{8\pi}\int_0^{+\infty} dr\, r^3 \hat \phi(r)^2,
\end{equation}
 where $\hat \phi(|k|) = \int\,dx\, \phi(|x|)e^{-ik\cdot x}$.

 Note that  we find the Landau equation with matrix \eqref{matrice_a}, which is not related to the Coulomb potential, but
 arises even though the potential is smooth and short-range. This fact was first established by N.N. Bogolyubov in 1946 \cite{N.N.B.}.

 In the present paper we want to start the rigorous analysis of the weak-coupling limit for an Hamiltonian particle system.
 Our result is very preliminary. We first decompose the $j$-particle marginals into two terms, one hopefully smooth
 and the other strongly oscillating, but small.  Eliminating this last term from the equations (with a procedure similar to that proposed by Zwanzig \cite{Zw})
 we find an equation with memory, which we can handled  up to the first order in time. We show that this contribution agrees with the corresponding one arising from the Landau equation. Roughly speaking we present a rigorous derivation of the Landau equation at time zero.

 It is well known that the situation for the Boltzmann equation is  better, namely we are able to derive such a kinetic equation for a short time \cite {Lanf}
(see also \cite{CIP} for additional comments and results)  in the low-density (or Boltmann-Grad) limit.

Note that the linear case, namely a single particle in a random potential under the weak-coupling limit, 
is well understood, see \cite {DGL} and references quoted therein.

Our analysis deals with the nonlinear problem but our techniques could apply as well to the linear case. We think that, while we can easily obtain  the same consistency result presented here, it seems very difficult to go further. In \cite {DGL} and related references, it is crucial the use of probabilistic tools which seems more efficient compared with the hierarchical approaches. In contrast  it is very difficult to implement the ideas working for the linear case to the present problem. 

Finally we want to mention that the same problem of characterizing the weak-coupling limit of particle systems, arises also in a quantum mechanical context.
In this case the quantity which we are interested in is the Wigner transform \cite {Wi} which is a way to describe a quantum state as a function in the classical phase space.
In contrast with the classical  case, we expect that the Wigner transform approaches, in the weak-coupling limit, the solution of a suitable Boltzmann equation, with a corrections due to the statistics, whenever taken in explicit consideration.
We quote \cite {HL}, \cite {BCEP1},  \cite {ESY},  \cite {BCEP2},  \cite {BCEP4} for the few results in this direction and  \cite {LS} and references quoted therein,  for the Boltzmann description of wave dynamics in the weak-coupling limit.


\section{Weak-coupling limit for classical systems}
\label{sec:1}

We consider a classical system of $N$ identical particles of unit
mass in the whole space. Positions and velocities are denoted by  the vectors $Q_N= \{q_1\dots q_N\} $ and
$V_N=\{v_1\dots v_N\}$ respectively.
The particles interact via a spherically symmetric, smooth  potential of finite range $\phi:\R^3 \to \R$, namely  $\phi(x)=0$ if $|x|>r$ for some positive $r$.  In the following we assume units for which $r=1$.

The Newton equations read as:
\begin{equation}
\label{MicroN}
\frac {d}{d\tau} q_i=v_i \qquad \frac {d}{d\tau} v_i=\sum _{\substack
{j=1 \dots N: \\ j\neq i}} F(q_i-q_j).
\end{equation}
Here $F=-\nabla \phi$ denotes the interparticle (conservative) force,
and
$\tau$ is the time.

Let $\var > r$  be a small parameter denoting the ratio between the macroscopic and microscopic space-time unities.

We are interested in a situation where the number of particles $N$ is very
large and the interaction strength quite moderate. The system has a unitary density so that
we assume $N=\var^{-3}$. In addition we look
for a reduced or macroscopic description of the system. Namely if $q$
and $\tau$ refer to the system seen in a microscopic scale, we rescale eq.n (\ref{MicroN})
in terms of the macroscopic variables
$$
x=\var q \qquad t=\var \tau
$$
whenever the physical variables of interest are varying on such scales
and are almost constant on the microscopic scales.

Remembering that we want to describe weakly interacting systems,
we also rescale
the potential according to:
\begin{equation}
\phi  \to \sqrt {\var}\phi ,
\end{equation}
so that
system (\ref{MicroN}), in terms of the $(x,t)$ variables,
becomes:
\begin{equation}
\label{MacroN}
\frac {d}{dt} x_i=v_i \qquad \frac {d}{dt} v_i=-\frac 1{\sqrt {\var}}\sum
_{\substack {j=1\dots N: \\ j\neq i}} \nabla \phi(\frac
{x_i-x_j}{\var})=\frac 1{\sqrt {\var}}\sum
_{\substack {j=1\dots N: \\ j\neq i}} F(\frac
{x_i-x_j}{\var}).
\end{equation}
Note that the velocities are automatically unscaled.

A statistical description of the above system passes through the introduction of
a probability distribution on the phase space
of the system.
Let $W^N=W^N (X_N,V_N)$ be a symmetric (in the exchange of variables) probability distribution. Here $(X_N,V_N)$ denote the set of positions and velocities:
$$
X_N=\{x_1 \dots x_N\} \quad V_N=\{v_1 \dots v_N\},  \quad x_i \in \R^3, v_i \in \R^3.
$$
Then from eq.ns (\ref {MacroN}) we obtain the following
Liouville equation
\begin{equation}
\label{Liouville}
(\pa_t+\sum_{i=1}^N v_i\cdot \nabla_{x_i}) W^N(X_N,V_N) =\frac 1{\sqrt\var} \big
(T^\var_NW^N\big)(X_N,V_N).
\end{equation}
Here we have introduced the operator
\begin{equation}
(T^\var_NW^N\big)(X_N,V_N)=\sum_{0<k<\ell\le N}
(T^\var_{k,\ell}W^N\big)(X_N,V_N),
\end{equation}
with
\begin{equation}
T^\var_{k,\ell}W^N=\nabla \phi(\frac {x_k-x_\ell}{\var})
\cdot (\nabla_{v_k}- \nabla_{v_\ell})W^N .
\end{equation}
To investigate the limit $\var \to 0$ it is convenient to introduce the
BBKGY hierarchy for the $j$- particle distributions defined as
\begin{align}
f^N_j(X_j,V_j)=&\int dx_{j+1} \dots \int dx_N \int dv_{j+1} \dots \int
dv_N \\
& W^N(X_j,x_{j+1} \dots x_N;V_j,v_{j+1} \dots v_N)\nn
\end{align}
for $j=1.\dots, N-1$.
Obviously we set $f^N_N=W^N$.
Note that BBGKY stands for Bogolyubov, Born, Green, Kirkwood and Yvon, the
names of physicists who introduced
independently this system of equations (see e.g. \cite {Bal}).

Such a hierarchy is obtained by means of
a partial integration of the Liouville equation (\ref{Liouville}) and standard
manipulations. The result is
(for $1\leq j\leq N$):
\begin{equation}
\label{hie}
(\pa_t+\sum_{k=1}^j v_k\cdot \nabla_{x_k})f^N_j=
\frac {1}{\sqrt {\var}} T^{\var}_j f^N_j + \frac {N-j}{\sqrt
{\var}}C^{\var}_{j+1}f^N_{j+1}
\end{equation}
for $1\leq j \leq N$.

We set
$$
f^N_j=0, \quad  \hbox {for}\quad j>N, \quad \hbox {and}\quad f^N_N=W^N.
$$
The operator $C^{\var}_{j+1}$ is defined as:
\begin{equation}
C^{\var}_{j+1}=\sum_{k=1}^j C^{\var}_{k,j+1}
\; ,
\end{equation}
and
\begin{align}
&C^{\var}_{k,j+1}f_{j+1}(x_1\dots x_j;v_1 \dots v_j)=\\
&-\int dx_{j+1}\int dv_{j+1}
F \left(\frac {x_k-x_{j+1}}{\var}\right)
\cdot \nabla_{v_k}
f_{j+1}(x_1,x_2,\dots,x_{j+1}; v_1,\dots, v_{j+1})
\nn .
\end{align}
 $C^\var_{k,j+1}$ describes the
interaction of particle $k$, belonging to the $j$-particle
subsystem, with a particle outside the subsystem, conventionally
denoted by the number $j+1$ (this numbering uses the fact that all
the particles are identical).

We finally fix the initial value  $\{f_j^0\}_{j=1}^N$ of the solution
$\{f^N_j(t)\}_{j=1}^N$  assuming that
$\{f_j^0\}_{j=1}^N$ is factorized, that is, for all $j=1,\dots N$
\begin{equation}
\label{chaos}
f^0_{j}=f_0^{\otimes j},
\end{equation}
where $f_0$ is a given one-particle distribution function.
This means that the state of any pair of particles is statistically
uncorrelated at time zero. Of course such a statistical independence is
destroyed at time $t>0$ because dynamics creates correlations and eq.n (\ref {hie})
shows that the time evolution of $f^N_1 $ is determined by the knowledge
of  $f^N_2 $ which turns out to be dependent on $f^N_3 $ and so on.
However, since the interaction between two given particles is going to
vanish in the limit $\var \to 0$, we can hope that such statistical
independence is recovered in the same limit.
Therefore we expect that when $\var \to 0$ the one-particle
distribution  function
$f^N_1$  converges to the solution of a suitable nonlinear kinetic
equation
$f$,  which we are going to investigate.

If we expand $f^N_j(t)$ as a perturbation of the free flow
$S(t)$ defined
as
\begin{equation}
(S(t) f_j)(X_j,V_j)= f_j(X_j-V_jt, V_j),
\end{equation}
we find
\begin{align}
\label{expand}
f_j^N(t)= &S(t)f_j^0+ \frac {N-j}{\sqrt {\var}}\int _0^t
S(t-t_1)C_{j+1}^\var f_{j+1}^N (t_1)dt_1+\\
&\frac {1}{\sqrt {\var}}\int _0^t S(t-t_1)T_{j}^\var f_{j}^N (t_1)dt_1\nn.
\end{align}
We now try to keep information on the limit behavior of $f_j^N(t)$.
Assuming for the moment that the time evolved $j$-particle
distributions $f_j^N(t)$ are smooth (in the sense that the first and second derivatives are
uniformly bounded in $\var$), then
\begin{align}
\label{coll1}
&C_{j+1}^\var f_{j+1}^N (X_j;V_j;t_1)=\\
&-\var^3 \sum_{k=1}^j
\int dr\int dv_{j+1}
F (r)
\cdot \nabla_{v_k}
f_{j+1}(X_j,x_{k}-\var r;V_j, v_{j+1},t_1)\nn.
\end{align}
Because of the identity
\begin{equation}
\int dr F(r)=0,
\end{equation}
we find that
\begin{equation}
C_{j+1}^\var f_{j+1}^N (X_j;V_j;t_1)=O(\var^4)
\end{equation}
provided that $D^2_v f_{j+1}^N$ is uniformly bounded. Since
$$
\frac {N-j}{\sqrt {\var}}=O(\var^{-\frac 72})
$$
we see that the second term in the right hand side of (\ref{expand}) does not give
any contribution in the limit.

Moreover
\begin{align}
\label{coll2}
&\int _0^t S(t-t_1)T_{j}^\var f_{j}^N (t_1)dt_1=\\
&\sum_{i \neq k}\int _0^t
dt_1 F \left(\frac {(x_i-x_k)-(v_i-v_k) (t-t_1)}{\var} \right)
\tilde f(X_j,V_j;t_1)\nn
\end{align}
where $\tilde f$ is a smooth function.
We note that the time integral in (\ref{coll2}) is $O(\var)$ because $F\neq 0$ only for times in an interval of length $O(\var)$. Therefore $f^N_j$ cannot be smooth since we expect a nontrivial limit.

In order to look for a (nontrivial)  kinetic equation, we can conjecture that
\begin{equation}
f_j^N=g_j^N+\gamma_j^N
\end{equation}
where $g_j^N$ is the main part of $ f_j^N$ and is smooth, while $\gamma_j^N$ is small, but strongly oscillating. We operate this decomposition according to the following equations which define $g_j^N$ and $\gamma_j^N$:
\begin{equation}
\label{newhie1}
(\pa_t+\sum_{k=1}^j v_k\cdot \nabla_{x_k})g^N_j=
 \frac {N-j}{\sqrt {\var}}C^{\var}_{j+1}g^N_{j+1}+
  \frac {N-j}{\sqrt {\var}}C^{\var}_{j+1}\gamma^N_{j+1}
\end{equation}
\begin{equation}
\label{newhie2}
(\pa_t+\sum_{k=1}^j v_k\cdot \nabla_{x_k})\gamma^N_j=
\frac {1}{\sqrt {\var}} T^{\var}_j \gamma^N_j + \frac {1}{\sqrt {\var}} T^{\var}_j g^N_j ,
\end{equation}
with initial data
\begin{equation}
g^N_j (X_j,V_j,0)=f^0_j (X_j,V_j), \quad \gamma^N_j (X_j,V_j)=0.
\end{equation}
Note that $\gamma^N_1=0$ since $T^{\var}_1=0$.

The remarkable fact of this decomposition is that $\gamma$ can be eliminated. Indeed, let
$$(X_j(t),V_j(t))=(\{x_1(t) \dots x_j(t), v_1(t) \dots v_j(t) \})$$ 
be the solution of the $j$-particle flow (in macro variables)
\begin{equation}
\label{jbody}
\frac {d}{dt} x_i=v_i \qquad \frac {d}{dt} v_i=-\frac 1{\sqrt {\var}}\sum
_{\substack {k=1\dots j: \\ k \neq i }} \nabla \phi\left(\frac
{x_i-x_k}{\var}\right),
\end{equation}
with initial datum $(X_j,V_j)=(\{x_1\dots x_j, v_1 \dots v_j \})$.
Denote by $U_j(t)$ the operator
\begin{equation}
\label{u}
U_j(t)f(X_j,V_j) =\exp \{ t( -\sum_i v_i \cdot \nabla_{x_i}  +\frac 1 {\sqrt{\e}} T_j) \} f(X_j,V_j)=  f(X_j(-t),V_j(-t)),
\end{equation}
then eq.n  (\ref {newhie2}) can be solved:
\begin{equation}
\label{gamma0}
\gamma^N_j (t)= \int_0^t ds U_j(s) \frac 1 {\sqrt {\var}} T_jg_j^N(t-s).
\end{equation}
Explicitly
\begin{equation}
\label{gamma}
\gamma^N_j (X_j,V_j,t)=-\frac 1 {\sqrt {\var}} \int_0^t ds
\sum
_{1\leq i <k \leq j} F \left(\frac
{x_i(-s) -x_k(-s) }{\var}\right) \cdot [(\nabla_{v_i}-\nabla_{v_k} )g_j^N](X_j(-s),V_j(-s);t-s).
\end{equation}
Inserting (\ref{gamma0})    in   (\ref{newhie1}) we finally arrive to a closed hierarchy for  $\{g_j^N\}_{j=1}^N$.
Obviously we pay the price of a memory term given by the time integral in (\ref{gamma0}) or  in (\ref{gamma}).

We write the hierarchy in integral form.
Then
\begin{align}
\label{Duha}
g_j^N(t)= &S(t)f_j^0+ \frac {N-j}{\sqrt {\var}}\int _0^t
S(t-\tau)C_{j+1}^\var  g^N_{j+1} (\tau) d\tau\\
&+\frac {N-j}{\sqrt{\var}}\int _0^t S(t-\tau) C^\var _{j+1}  \gamma_{j+1}^N (\tau ) d\tau \nn \\
=&S(t)f_j^0+ \frac {N-j}{\sqrt {\var}}\int _0^t
S(t-\tau)C_{j+1}^\var g_{j+1}^N (\tau) d\tau \nn \\
&+\frac {N-j}{ \var}\int _0^t  d\tau \int_0^\tau d\sigma S(t-\tau) C^\var _{j+1}  U_{j+1} (\tau-\sigma ) T_{j+1} g_{j+1}^N(\sigma). \nn
\end{align}
\medskip

\begin{remark}
Why do we expect that $\gamma^N_j$ strongly oscillates?
Let us try to control the first  derivatives of  $h(X_j,V_j,t)=U_j(t)
h_0(X_j,V_j)=h_0(X_j(-t),V_j(-t))$ for a given smooth function $h_0$. Then
$$
\frac {\pa h(X_j,V_j, t)}{\pa x_i^\alpha}=\sum_{ k,\beta}  \left( \frac {\pa
h_0}{\pa x_k^\beta} (X_j(-t),V_j(-t))
\frac {\pa x_k^\beta (-t)}{\pa x_i^\alpha}+  \frac {\pa h_0}{\pa v_k^\beta} (X_j(-t),V_j(-t))
\frac {\pa v_k^\beta (-t) }{\pa x_i^\alpha}\right)
$$
and analogous formula for  $\frac {\pa h(t)}{\pa v_i^\alpha}$. Here we are
using Greek indices for the components of $x_i$ and $v_i$.
To estimate  quantities like $\frac {\pa x_k^\beta (-t)}{\pa x_i^\alpha},
\frac {\pa x_k^\beta (-t)}{\pa v_i^\alpha}, \frac {\pa v_k^\beta (-t)}{\pa
x_i^\alpha}, \frac {\pa v_k^\beta (-t)}{\pa v_i^\alpha}$ we use eq.n
(\ref{jbody}) and find (changing $-t \to t$)
\begin{equation}
\label{djbody1}
\frac {d}{dt} \frac {\pa x_k^\beta (t)}{\pa x_i^\alpha}= \frac {\pa v_k^\beta (t)}{\pa
x_i^\alpha}, 
\end{equation} 
\begin{equation}
\label{djbody2}
 \frac {d}{dt} \frac {\pa v_k^\beta (t)}{\pa x_i^\alpha}
= \frac 1 {\var ^{3/2}}\sum
_{\substack {r=1\dots j: \\ r \neq k }}  \frac {\pa F^{\beta}}{\pa x_k^\gamma} (\frac
{x_k(t) -x_r (t) }{\var}) \left ( \frac {\pa x_k^\gamma (t)}{\pa x_i^\alpha} -
\frac {\pa x_r^\gamma (t)}{\pa x_i^\alpha} \right ).
\end{equation}

Integrating eq.ns \eqref {djbody1} and \eqref {djbody2} in time, we arrive,  by using the  Gronwall lemma,  to 
$$
\left|\frac {\pa v_k^\beta (t)}{\pa x_i^\alpha}\right| \leq C \exp \left(\frac {C \tau_c}  {\e^{3/2}}\right)
$$
where $\tau_c$ is the scattering time, namely the time interval for which $|x_k(t) -x_r (t)| \leq \e$ . Now, even though  $\tau_c= O(\e)$ (neglecting small relative velocities),  it seems difficult to get something better than a bound like
$ \exp (\frac C  {\sqrt {\e}})$.

In conclusion we expect that the first derivatives of $h(t)$ are
$ O(\exp (\frac 1 {\sqrt {\e}}))$.  Looking at  eq.n \eqref{gamma0} we expect for $\gamma$ the same behavior. In contrast,  the action of the operator $C_j$ is regularizing (althoug we are not able to prove this) so that  we expect $g$ to be smooth.

On the other hand $\gamma^N_j$ is also expected to be small, in some
sense. Indeed by taking the scalar product of (\ref {gamma0}) by a smooth
function $u$, we find
\begin{align}
|(u,\gamma^N_j (t))|& \leq \frac 1 {\sqrt {\var}} \int_0^t ds
\| U_j(-s) u \|_{L^\infty} \| T^\e_j g_j^N(t-s)\|_{L^1}  \nn \\
&\leq  \e^{5/2} \frac {j(j-1)}2 \|  u \|_{L^\infty}
 \int_0^t ds
\int dx_1 \int dx_3 \dots \int dV_j  \int dr |F(r)| \nn \\
&\ \ \ \ |(\nabla_{v_1}-\nabla_{v_2} )g_{j}^N
(x_1,x_1+\e r, x_3 \dots,V_j; t-s)| \nn.
\end{align}
Therefore this term is vanishing provided that $g^N$ is sufficiently
smooth (uniformly in $\e$). 
\end{remark}

A rigorous analysis of  the limit $N\to \infty$, $\var=N^{-(1/3)}$  seems to be very difficult. We expect that, in this limit, both  $f^N_j(t)$ and $g^N_j(t)$ would converge to $f(t)^{\otimes j}$, where $f$ solves the Landau equation stated in Introduction.
We cannot prove it, but a first step in this direction is made in the following Sections.


\section{Consistency}
\label{sec:2}

We consider eq.n \eqref{Duha} written in symbolic form as
\begin{equation}
g_j=S(t)f_j^0+A_{j+1}g_{j+1},
\end{equation}
where all upper indices $N$ are omitted for brevity. To solve these equations one can use the obvious iterative scheme
$$
g_j^0=S(t)f_j^0,\ \ \ g_j^{(n+1)}=S(t)f_j^0+A_{j+1}g_{j+1}^{(n)},\ \ \ n=0,1,\dots
$$
Our goal in this section is to prove that the equation for $g_1^{(1)}(t)=\tilde g_1^N(t)$ is consistent with the Landau equation.
Thus we replace (\ref {Duha})  by its first approximation:
\begin{align}
\label{Duhamel1}
\tilde g_j^N(t)= &S(t)f_j^0+ \frac {N-j}{\sqrt {\var}}\int _0^t
S(t-\tau)C_{j+1}^\var S(\tau) f^0_{j+1}  d\tau\\
&+\frac {N-j}{\sqrt{\var}}\int _0^t S(t-\tau) C^\var _{j+1}  \tilde \gamma_{j+1}^N (\tau ) d\tau \nn \\
=&S(t)f_j^0+ \frac {N-j}{\sqrt {\var}}\int _0^t
S(t-\tau)C_{j+1}^\var S(\tau)g_{j+1}^N d\tau\\
&+\frac {N-j}{ \var}\int _0^t  d\tau \int_0^\tau d\sigma S(t-\tau) C^\var _{j+1}  U_{j+1} (\tau-\sigma ) T_{j+1} S(\sigma)f^0_{j+1}. \nn
\end{align}
Here we set
\begin{align}
\label{gamma1}
\tilde \gamma^N_j (X_j,V_j,\tau )=&\frac 1 {\sqrt{\var}}
 \int_0^\tau d\sigma U_{j} (\tau-\sigma ) T_{j} S(\sigma)f^0_{j+1} \\
=& -\frac 1 {\sqrt {\var}} \int_0^t ds
\sum
_{1\leq i <k \leq j} F (\frac
{x_i(-s) -x_k(-s) }{\var}) \cdot  \\
&\cdot [(\nabla_{v_i}-\nabla_{v_k} ) S(\tau -s) f^0_j] (X_j(-s),V_j(-s)). \nn
\end{align}
We note that $\tilde \gamma^N_j$ can be explicitly computed.
\begin{lemma}
We have
\begin{align}
\label{expl}
\tilde \gamma^N_j(X_j,V_j,t) & = (U_j(t) f_j^0-S(t) f_j^0) (X_j,V_j) \nn \\
&=f^0_j( X_j (-t),V_j (-t)) - f^0_j( X_j-V_j t,V_j)\;.
\end{align}
\end{lemma}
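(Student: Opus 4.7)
The plan is to recognize the right-hand side of \eqref{gamma1} as exactly the Duhamel (variation of constants) formula relating the perturbed interacting semigroup $U_j(t)$ to the free-stream semigroup $S(t)$. The generator of $U_j$ is $L_U=-\sum_i v_i\cdot\nabla_{x_i}+\tfrac{1}{\sqrt{\var}}T_j$, while the generator of $S$ is $L_S=-\sum_i v_i\cdot\nabla_{x_i}$; the two differ precisely by the operator $\tfrac{1}{\sqrt{\var}}T_j$ that appears inside the integral in \eqref{gamma1}. So one should not need any real estimate, just an identity of semigroups.

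Concretely, I would fix $t$ and consider the auxiliary map $\sigma\mapsto U_j(t-\sigma)\,S(\sigma)f_j^0$ for $\sigma\in[0,t]$. Since $U_j$ and $S$ commute with their own generators, differentiating in $\sigma$ gives
\begin{equation*}
\frac{d}{d\sigma}\bigl[U_j(t-\sigma)S(\sigma)f_j^0\bigr]
= U_j(t-\sigma)\bigl(-L_U+L_S\bigr)S(\sigma)f_j^0
= -\frac{1}{\sqrt{\var}}\,U_j(t-\sigma)\,T_j\,S(\sigma)f_j^0.
\end{equation*}
Integrating from $0$ to $t$ and using $U_j(0)=S(0)=\mathrm{Id}$ yields
\begin{equation*}
S(t)f_j^0-U_j(t)f_j^0
= -\frac{1}{\sqrt{\var}}\int_0^t d\sigma\, U_j(t-\sigma)\,T_j\,S(\sigma)f_j^0,
\end{equation*}
i.e. $U_j(t)f_j^0-S(t)f_j^0=\tilde\gamma^N_j(X_j,V_j,t)$, which is the first equality of \eqref{expl}.

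The second equality is then just the translation of the definitions: by \eqref{u}, $U_j(t)f_j^0(X_j,V_j)=f_j^0(X_j(-t),V_j(-t))$, where $(X_j(\cdot),V_j(\cdot))$ is the $j$-particle Hamiltonian flow \eqref{jbody}; and by the definition of the free flow, $S(t)f_j^0(X_j,V_j)=f_j^0(X_j-V_j t,V_j)$. Substituting gives the explicit form in \eqref{expl}.

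I do not see a serious obstacle here: the lemma is morally the statement ``the Duhamel expansion of $U_j$ around $S$ reproduces $\tilde\gamma^N_j$'', so the entire content is the sign-careful computation above. The only thing one should double-check is a minor notational issue: in \eqref{gamma1} the object inside $T_j S(\sigma)(\cdot)$ must be $f_j^0$ (not $f_{j+1}^0$) for the dimensions of the operators to match, which is indeed consistent with the explicit representation displayed in the second line of \eqref{gamma1}. Under that reading, the identity is immediate from the semigroup argument.
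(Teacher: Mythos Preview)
Your proof is correct and is essentially the same Duhamel interpolation as in the paper: the paper differentiates $s\mapsto U_j(s)S(t-s)f_j^0$, whereas you differentiate $\sigma\mapsto U_j(t-\sigma)S(\sigma)f_j^0$, which is the same computation up to the substitution $s=t-\sigma$. Your remark that the $f_{j+1}^0$ in the first line of \eqref{gamma1} should be $f_j^0$ is also right; the explicit formula on the next line confirms this.
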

\begin{proof}
Let ${\mathcal L}_0=-\sum_i v_i \cdot \nabla_{x_i} $  be the free flow generator.\\
Then we compute
\begin{align}
U_j(t)f^0_j- S(t) f^0_j=& \int_0^t ds
 \frac d {ds} [ U_j(s) S(t-s) ]f^0_j  \\
=&  \int_0^t ds
 [ U_j(s) ( {\mathcal L}_0 +\frac 1 {\sqrt{\e}} T_j)  S(t-s) ]f^0_j   \nn \\
 &- \int_0^t ds
 [ U_j(s)  {\mathcal L}_0  S(t-s) ]f^0_j \nn \\
 =& \tilde \gamma_j^N (t).   \quad \quad \quad \quad \quad \quad \quad \quad \quad \quad\text      \nn 
\end{align} 
\end{proof}
For convenience of the reader we make explicit eq.n (\ref {Duhamel1})  in the case $j=1$
\begin{align}
\label{Duhamel2}
\tilde g_1^N(t)= &S(t)f_0+ \frac {N-1}{\sqrt {\var}}\int _0^t
S(t-\tau)C_{2}^\var S(\tau)f_{2}^0 d\tau\\
&+\frac {N-1}{\sqrt{\var}}\int _0^t S(t-\tau) C^\var _{2}  \tilde \gamma^N_{2} (\tau ) d\tau \nn
\end{align}
where, by Lemma 1,
\begin{align}
\label{gamma3}
\tilde \gamma^N_2 (x_1,v_1,x_2, v_2  ,\tau )=& -\frac 1 {\sqrt {\var}} \int_0^\tau ds
F (\frac
{x_1(-s) -x_2(-s) }{\var}) \cdot  \\
& \cdot[(\nabla_{v_1}-\nabla_{v_2} ) S(\tau -s) f^0_2](X_2(-s),V_2(-s))  \nn \\
=& \big[ f^0_2( X_2 (-\tau),V_2 (-\tau))-f_2^0(X_2-V_2\tau,V_2) \big]. \nn
\end{align}
The first result of the present paper is summarized in the following Theorem.
\begin{theorem}
Suppose $f_0\in C^3_0(\R^3 \times \R^3)$ be the initial probability density
satisfying:
\begin{equation}\label{phi-property}
| D^r f_0 (x,v)| \leq C e^{-b |v|^2} \quad \text {for} \qquad  r=0,1, 2
\end{equation}
where $D^r$ is any derivative of order $r$ and $b >0 $. 
Assuming also that $\phi \in C^2(\R^3)$  and $\phi(x)=0$ if $|x| >1$.
If \eqref{chaos} holds for $j=1,2$,
 then
\begin{equation}
\label{result}
\lim_{\var \to 0} \tilde g^N_1(t)=S(t) f_0+\int_0^t d\tau S(t-\tau) Q_L( S(\tau)f_0, S(\tau)f_0),
\end{equation}
\begin{equation}
\label{resultgamma}
 \lim_{\var \to 0} \tilde \gamma^N_1(t)=0,
\end{equation}
where $N\var^3=1$ and the above limits are considered in ${\mathcal D}'$.
\end{theorem}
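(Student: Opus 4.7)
The second statement $\lim_{\var\to 0}\tilde\gamma_1^N(t) = 0$ is trivial: since $T_1^\var = 0$ (the operator $T_j^\var$ sums over pairs, which are absent when $j=1$), one has $U_1(t) = S(t)$ and Lemma 1 gives $\tilde\gamma_1^N \equiv 0$. For the first statement I start from (\ref{Duhamel2}), which already writes $\tilde g_1^N(t)$ as $S(t)f_0$ plus two correction terms. The zeroth-order part matches the free-flow term of (\ref{result}), so the proof reduces to analyzing these two corrections in the limit $\var\to 0$, tested against $\varphi\in C_0^\infty$.

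The ``smooth'' correction $\frac{N-1}{\sqrt\var}\int_0^t d\tau\,S(t-\tau)\,C_2^\var S(\tau)f_2^0$ vanishes. Under the hypothesis $f_0\in C_0^3$, the free-flow image $S(\tau)f_2^0(x_1,v_1,x_2,v_2) = f_0(x_1-v_1\tau,v_1)f_0(x_2-v_2\tau,v_2)$ has derivatives bounded uniformly in $\var$. Rescaling $x_2 = x_1 - \var r$ in $C_2^\var$ extracts a factor $\var^3$, and Taylor-expanding the smooth integrand in $\var r$ kills the leading term by $\int F(r)\,dr = 0$ (spherical symmetry of $\phi$), leaving $O(\var^4)$. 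Multiplied by $(N-1)/\sqrt\var = O(\var^{-7/2})$, this contributes $O(\sqrt\var)\to 0$. This is exactly the informal argument from Section \ref{sec:1}, made rigorous on a finite time interval.

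The memory term $\frac{N-1}{\sqrt\var}\int_0^t d\tau\,S(t-\tau)\,C_2^\var\,\tilde\gamma_2^N(\tau)$ is the core of the proof. By Lemma 1, $\tilde\gamma_2^N(X_2,V_2,\tau) = f_2^0(X_2(-\tau),V_2(-\tau)) - f_2^0(X_2 - V_2\tau, V_2)$, i.e.\ the difference between two-body and free backward evolution. Rescale $x_2 = x_1 - \var r$: the factor $(N-1)\var^3$ collapses to $1$ and a $1/\sqrt\var$ prefactor remains to be absorbed. In reduced relative coordinates $y = x_1-x_2$, $w = v_1-v_2$, with $r = y/\var$ and micro-time $\sigma = t/\var$, the two-body flow (\ref{jbody}) becomes $dr/d\sigma = w$, $dw/d\sigma = -2\sqrt\var\,\nabla\phi(r)$, an $O(\sqrt\var)$ perturbation of free streaming. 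The Born approximation then yields $V_i(-\tau) - v_i = O(\sqrt\var)$, with explicit leading term given by an integral of $\nabla\phi$ along the straight-line relative trajectory $r + w\sigma$, and momentum conservation forces $V_1(-\tau) - v_1 = -(V_2(-\tau) - v_2)$. Taylor-expanding $f_2^0 = f_0\otimes f_0$ to first order around the free-flow point $Z_{\mathrm{free}}(-\tau)$ produces $\tilde\gamma_2^N = \sqrt\var\,G^\var(x_1, v_1, v_2, r; \tau) + o(\sqrt\var)$ with $G^\var$ admitting a finite limit, so the $1/\sqrt\var$ prefactor is exactly compensated.

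Inserting into $C_2^\var$, letting $\var\to 0$ so that $x_1 - \var r \to x_1$ (locality in position), and performing the $r$ and $\sigma$ integrations gives a tensor kernel of the form $\int dr\,F(r)\otimes\int d\sigma\,F(r + w\sigma)$ acting against derivatives of $S(\tau)f_0$. A classical Fourier computation (\cite{N.N.B.}, \cite{Bo-76}) identifies this with the Landau matrix $a(w)$ in (\ref{matrice_a}), with constant $A$ given by (\ref{A}). The combination $(\nabla_{v_1}-\nabla_{v_2})$ inside the operator arises automatically from momentum conservation $\Delta v_1 = -\Delta v_2$ in the Taylor expansion of $f_0\otimes f_0$, and the outer $\nabla_{v_1}$ comes from $C_2^\var$; assembled, they reproduce $\int_0^t d\tau\,S(t-\tau)\,Q_L(S(\tau)f_0, S(\tau)f_0)$, yielding (\ref{result}). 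The main obstacle is the uniform control of both the Born remainder in the two-body flow and the Taylor remainder in $f_2^0$: both are pointwise $O(\sqrt\var)$ in the smooth regime, but uniformity fails near the grazing set $w = 0$, where the microscopic scattering time diverges and the Landau kernel $a(w)$ is singular. The statement is made in $\mathcal{D}'$ precisely to circumvent this: testing against $\varphi\in C_0^\infty$ integrates out the measure-zero singular set.
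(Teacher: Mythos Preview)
Your outline captures the correct mechanism---the smooth correction dies by $\int F\,dr=0$, and the memory term manufactures $Q_L$ through a Born-type expansion of the two-body scattering---but the last paragraph contains a real gap. The assertion $V_i(-\tau)-v_i=O(\sqrt\var)$ is not uniform: the correct bound (Lemma~3 in the paper) is $O(\sqrt\var/|w|)$, and both this and the scattering-time estimate $\mathrm{meas}(\Delta_\var)\le 4\var/|w|$ require a lower bound $|w|>a\var^{1/4}$ with $a=4\sqrt{\|F\|_\infty}$. Below that threshold the relative velocity can reverse during the collision and the interaction time need not be $O(\var)$. Your proposed cure---that pairing with $\varphi\in C_0^\infty$ ``integrates out the measure-zero singular set''---does not work: the error terms degrade like inverse powers of $|w|$ over a full neighborhood of $\{w=0\}$, and the worst of them (a factor $|w|^{-3}$, arising when one replaces the interacting relative trajectory by the free one $r-ws$ inside the second force factor) is not integrable at the origin in $\R^3$. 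Smoothness of the test function in $(x_1,v_1)$ is irrelevant to this divergence in $v_2$.

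The paper closes this by an explicit cutoff at $|w|=a\var^{1/4}$. On $\{|w|\le a\var^{1/4}\}$ one does \emph{not} expand: Lemma~1 gives $\|\tilde\gamma_2^N\|_\infty\le 2\|f_0\|_\infty^2$ uniformly in $\var$, and the volume $O(\var^{3/4})$ of the excluded set beats the remaining $\var^{-1/2}$ to yield $O(\var^{1/4})$ (Lemma~2). On $\{|w|>a\var^{1/4}\}$ the $|w|^{-3}$ integral is truncated and contributes only $O(\sqrt\var\,|\log\var|)$; the remaining error ($h_\var-h$ in the paper's notation) carries $|w|^{-2}$, which is integrable. This two-region decomposition, together with the quantitative Lemma~3, is the missing ingredient in your sketch. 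A minor methodological point: the paper works from the \emph{integral} representation of $\tilde\gamma_2^N$ (the first line of (\ref{gamma3})), which already displays the factor $(\nabla_{v_1}-\nabla_{v_2})S(\tau-s)f_2^0$, so the only approximations needed are in the force arguments and in the evaluation point of that gradient; your Taylor expansion of the difference $U_2(\tau)f_2^0-S(\tau)f_2^0$ is viable but requires extra bookkeeping for the position displacement, which persists over the full macroscopic time $\tau$ after the collision.
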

\begin{proof}
Let $u \in {\mathcal D} (\R^3 \times \R^3)$  be a test function.
From now on we will denote by $(h_j,k_j)=\int dX_j\int dV_j\, h_j(X_j,V_j)k_j(X_j,V_j)$ the inner product.
Then
\begin{align}
\label{wDuhamel}
(u,\tilde g_1^N(t))= (u,S(t)f_0)+ \frac {N-1}{\sqrt {\var}}\int _0^t
(u, S(t-\tau)C_{2}^\var S(\tau)f_{2}^0) d\tau_1+
\int _0^t  {\mathcal T}_\var (\tau)d\tau,
\end{align}
where
\begin{align}
\label{tau1}
{\mathcal T}_\var (\tau)=& -\frac {N-1}{\sqrt{\var}}  \int dx_1 \int dx_2 \int dv_1\int dv_{2}
(\nabla_{v_1} S(\tau-t) u(x_1,v_1)) \cdot \\
 &\cdot F \left(\frac {x_1-x_{2}}{\var}\right)\,
\tilde \gamma_2^N  (x_1,x_2,v_1,v_2,\tau) \nn.
\end{align}\\
We have already seen that the second term in the right hand side of  (\ref{wDuhamel}) is vanishing. Therefore we have to evaluate the last term, namely
$
\int_0^t  d\tau {\mathcal T}_\var (\tau).
$
We split the term ${\mathcal T}_\var (\tau)$ into two terms
\begin{equation}
\label{decomp}
{\mathcal T}_\var= {\mathcal T}_\var ^{\leq}+{\mathcal T}_\var^{>}
\end{equation}
where
\begin{align}
\label{tau>}
{\mathcal T}_\var (\tau)^>=&- \frac {N-1}{\sqrt{\var} } \int dx_1 \int dx_2 \int  dv_1\int_{|w|>a \e^{1/4}} dv_{2}
(\nabla_{v_1} S(\tau-t) u(x_1,v_1))\cdot \\
 &\cdot F \left(\frac {x_1-x_{2}}{\var}\right)
\tilde \gamma_2^N  (x_1,x_2,v_1,v_2,\tau) \nn
\end{align}
where $w=v_1-v_2$ is the relative velocity and $a$ is a number to be fixed later on.
${\mathcal T}_\var ^{\leq}$ is defined accordingly.\\
The reason of this decomposition  will be clear  later on. For the moment we show that
${\mathcal T}_\var ^{\leq}$  is negligible.

\begin{lemma}
\begin{equation}
 {\mathcal T}_\var ^{\leq} =O(\e^{1/4})\;.
\end{equation}
\end{lemma}

\begin{proof}
By Lemma 1 we have that   $\tilde \gamma^N_2$ is uniformly bounded. Moreover by the change of variables
$$
x_2=x_1-\e r
$$
we get
\begin{align}
| {\mathcal T}_\var ^{\leq}| \leq & C (N-1) \e^3 \frac 1 {\sqrt {\e}} \int dx_1  \int  dv_1|\nabla_{v_1} S(\tau-t) u(x_1,v_1)|
 \int dr |F(r)|  \int_{|w| \leq a \e^{1/4}} dw \\
\leq  & C \e^{1/4}. \nn   
 \end{align}
 \end{proof}
To evaluate ${\mathcal T}_\var^{>}$ we use (\ref{gamma3}) to write it as
\begin{align}
\label{tau0}
{\mathcal T}_\var^{> }(\tau)= &(N-1)\var^3  \int dx_1 \int dr \int dv_1\int_{|w|>a \var^{1/4}} dv_{2} \\
& \frac 1 {\var}  \int_0^{\tau} ds F_\alpha( r) F_\beta\left( \frac  {x_1( -s)-x_2(-s)} {\var}\right)
 [h_\var (x_1,x_2,v_1,v_2,\tau, s)]_{\alpha,\beta},  \nn
 \end{align}
 where $x_2=x_1-\e r$ and $h_\var$ is the matrix
$$
(h_\var)_{\alpha,\beta}= -(\nabla_{v_1} S(\tau-t) u(x_1,v_1))_\alpha [  (\nabla_{v_1}-\nabla_{v_2} )
S(\tau-s) f^0_2]_\beta (X_2(-s),V_2(-s)), \ \ \alpha,\beta=1,2,3.
$$
The summation over repeated Greek indices is assumed here and below.\\
Here the flow $X_2 (t)=( x_1(t),x_2(t) )$ has initial conditions $(x_1, x_1-\e r)$.
Scaling times we also find
\begin{align}
\label{tau1}
{\mathcal T}_\var^{> }(\tau)= &(N-1)\var^3  \int dx_1 \int dr \int dv_1\int_{|w|>a \var^{1/4}} dv_{2} \\
& \int_0^{\tau/\e } ds F_{\alpha}( r) F_{\beta}\left( \frac  {x_1( -\e s)-x_2(-\e s)} {\var}\right)
 [h_\var (x_1,x_2,v_1,v_2,\tau, \e s)]_{\alpha,\beta}\,.  \nn
 \end{align}
Let us   introduce the function $h$ which is the formal limit of $h_\var$, namely
\begin{equation}\label{h}
h_{\alpha,\beta}(x_1,v_1,v_2,\tau) =- R_{\alpha}(x_1,v_1,\tau) [  (\nabla_{v_1}-\nabla_{v_2} )
S(\tau) f^0_2(x_1,x_1,v_1,v_2)]_{\beta},
\end{equation}
where
\begin{equation}\label{test-f}
R(x_1,v_1,\tau)= \nabla_{v_1} S(\tau-t) u(x_1,v_1).
\end{equation}
We split  ${\mathcal T}_\var^{> } $ into two terms
$$
{\mathcal T}_\var^{> }={\mathcal T}_1^{> }+{\mathcal T}_2^{> }
$$
where
\begin{equation}
\label{t1}
\begin{split}
{\mathcal T}_1^{> } (\tau)=(N-1)\var^3  &\int dx_1 \int dr \int dv_1\int_{|w|>a \var^{1/4}} dv_{2}\\
 &\int_0^{\frac {\tau}{\var}} ds F_{\alpha}( r) F_{\beta}\left( \frac  {x_1(-\var s)-x_2(-\var s)} {\var}\right)
 h_{\alpha,\beta} (x_1,v_1,v_2,\tau)
\end{split}
\end{equation}
and
\begin{equation}
\label{t2}
\begin{split}
{\mathcal T}_2 ^{>} (\tau)=(N-1)\var^3  &\int dx_1 \int dr \int dv_1\int_{|w|>a \var^{1/4}} dv_{2}\\
 &\int_0^{\frac {\tau}{\var}} ds F_\alpha( r) F_\beta\left( \frac  {x_1(-\var s)-x_2(-\var s)} {\var}\right)
(h_\var - h)_{\alpha,\beta} .
\end{split}
\end{equation}
We shall show that ${\mathcal T}_2 ^{>} (\tau)$ is vanishing while ${\mathcal T}_1 ^{>} (\tau)$ has the right behavior.
In the evaluation of  ${\mathcal T}_1 ^{>} (\tau)$ we note that $h$ does not depend on $s$ so that we have to evaluate the integral
\begin{equation}
\label{int}
 \int_0^{\frac {\tau}{\var}} ds  F\left( \frac  {x_1(-\var s)-x_2(-\var s)} {\var}\right)=\frac 1 \e  \int_0^{\tau} ds F\left( \frac  {x_1( -s)-x_2(-s)} {\var}\right).
\end{equation}
Indeed the integral (\ref {int}) can be bounded when the interaction time of the two-particle system is $O(\e)$ and this is  true only
if  the relative velocity is not too small (see Lemma 3 below). This explains why we did the decomposition \eqref {decomp}.

\begin{lemma}
Setting  $w= v_1-v_2$,  suppose that
\begin{equation}\label{high_relv}
|w|> a \var^{1/4}
\end{equation}
where $a=4\sqrt{ \| F\|_{L^\infty}}$. Then, defining for any real number $s$
\begin{equation}
\Delta_\var =   \{ s | |x_1(s)-x_2(s)|< \var  \} ,
\end{equation}
we have
\begin{equation}
\emph{meas} (\Delta_\var)   \leq \frac {4\var}{|w|} .
\end{equation}
Moreover, for $i=1,2$:
\begin{equation}
\label{more}
|v_i(\var s) -v_i| \leq C \frac {\sqrt {\var}} {|w|}.
\end{equation}
\end{lemma}
\begin{proof}
Assuming first that $s>0$, we pass in the coordinate system around the center of mass (at the origin) and denote by
$\xi(t)=x_1(t)-x_2(t)$. Let $w=v_1-v_2$ be the relative velocity and $w_x$ its horizontal component.
We assume that at time zero the particles are in the interaction disk (more precisely, they enter in the interaction disk at time $s=0$) and fix the axis
in such a way that $w$ is  horizontal and its $x$- component is positive, namely $w_x=|w|$. Let $\bar t$ be the first  time for which
$$
w_x(t) \leq \frac {|w|}2.
$$
By the equation of motion
\begin{equation}
w_x(t) =|w| + \int_0^t  \frac 2 {\sqrt {\var}} F_x \left(\frac {\xi(s)}{\var}\right)
\end{equation}
we infer
$$
\frac {|w|}2   \geq |w| - \frac 2 {\sqrt {\var}} \| F \|_{L^\infty} \bar t
  $$
from which
 \begin{equation}
\bar t \geq \frac {\sqrt {\var}|w|} {4 \| F \|_{L^\infty} }.
\end{equation}
In the time interval $[0,\bar t]$ we have $w_x \geq \frac {|w|}2$ and the horizontal displacement is (under assumption \eqref{high_relv}) larger than
\begin{equation}\label{t-bar}
\frac {|w|}2\bar t \geq 2\var,
\end{equation}
since the diameter $2\e$ is a maximal path inside the sphere, independent of the initial point.\\
This implies  that, when $|\xi(t)| <\var$,  then $|w(t)| > |w(0)|/2$ and hence
\begin{equation}
\mbox{meas} (\Delta_\var)  \leq \frac {4\var} { |w|}.
\end{equation}
Moreover
\begin{equation}
v_1(\var s ) =v_1 + \int_0^{\var s}  \frac 1 {\sqrt {\var}} F\left(\frac {x_1(\sigma )-x_2(\sigma )}{\var}\right) d\sigma
\end{equation}
from which
\begin{equation}
|v_1(\var s ) -v_1| \leq C \frac {\sqrt {\var}} {|w|}.
\end{equation}
The case $s<0$ reduces to the case $s>0$ by changing the initial velocities to $v_i(0)=-v_i$ for $i=1,2$. This completes the proof. 
\end{proof}

Note  that
\begin{equation}
\frac {x_1(-\var s)-x_2(-\var s)} {\var}=r-ws +\frac 1 \var \int_0^{-\var s} d\sigma
[(v_1 (\sigma)-v_1) - (v_2 (\sigma)-v_2)]
\end{equation}
thus, by Lemma 3,
\begin{equation}
\label{lip}
\left|\frac {x_1(-\var s)-x_2(-\var s)} {\var}-(r-ws)\right| \leq Cs  \frac {\sqrt {\var}} { |w|}.
\end{equation}
The integral \eqref{t1} reads
\begin{align}
{\mathcal T}_1 ^{>} (\tau) & =(N-1)\var^3  \int dx_1 \int dr \int dv_1\int_{|w|>a \var^{1/4}} dv_{2} \\
&
 \int_0^{\frac {\tau}{\var}} ds F_{\alpha}( r) F_{\beta}( r-ws)
 h_{\alpha,\beta} (x_1,v_1,v_2,\tau)+E \nn
\end{align}
where the error term $E$ is given by
\begin{align}
E & =(N-1)\var^3  \int dx_1 \int dr \int dv_1\int_{|w|>a \var^{1/4}} dv_{2} \\
&
 \int_0^{\frac {\tau}{\var}} ds F_{\alpha}( r)  \left[   F_\beta\left( \frac  {x_1(-\var s)-x_2(-\var s)} {\var}\right)  -F_\beta ( r-ws)  \right]
 h_{\alpha,\beta} (x_1,v_1,v_2,\tau)  \nn
\end{align}
It is clear from the proof of Lemma 3 that $|x_1(-\e s)-x_2(-\e s)|\geq\e$ if $s\geq 4/|w|$ (see \eqref{t-bar}). On the other hand, $|r-ws|\geq 1$ if $s\geq 2/|w|$, provided $|r|\leq 1$. Hence,
\begin{align}
|E|\leq & C \sqrt {\e} \int dx_1 \int dv_1 \int_{|w|>a \var^{1/4}} dv_{2} \frac 1 {|w|}  \int_0^{\frac {4}{|w|}} s ds
|h (x_1,v_1,v_2,\tau)|\\
\leq &  C \sqrt {\e}  \int dx_1 \int dv_1 \int_{|w|>a \var^{1/4}} dv_{2} \frac 1 {|w|^3} |h (x_1,v_1,v_2,\tau)| \nn \\
\leq & C \sqrt {\e} | \log \e| . \nn
 \end{align}
 In the last step we estimated
 \begin{align}
 |h(x_1,v_1,v_2,\tau)\leq & C |f_0 (x_1 -v_2 \tau , v_2) ( \nabla_{v_1}-\tau \nabla_{x_1} )f_0 (x_1-v_1 \tau, v_1) | \\
 &+
| f_0 (x_1 -v_1 \tau, v_1) ( \nabla_{v_2}-\tau \nabla_{x_2}) f_0 (x_1-v_2 \tau, v_2) | \nn  \\
\leq & C e^{- b (|v_1|^2+|v_2|^2)} \nn.
 \end{align}
 
\begin{lemma}
For all $w \neq 0$,
\begin{equation}
\label{lim11}
\lim_{\var \to 0} \int dr  \int_0^{+\tau/\var} ds F_\alpha( r) F_\beta( r-ws)=
\frac 12 \lim_{\var \to 0} \int dr  \int_{-\tau/\var}^{+\tau/\var} ds F_\alpha( r) F_\beta( r-ws)= a(w)_{\alpha,\beta}
\end{equation}
where
\begin{equation}
a(w)_{\alpha,\beta}=\frac A{|w|}(\delta_{\alpha,\beta} -\frac {w_\alpha w_\beta}{|w|^2})
\end{equation}
and
\begin{equation}
A=\frac{1}{8\pi}\int_0^\infty d\rho \rho^3 \hat \phi^2 (\rho),
\end{equation}
with $\hat\phi(|k|)=\int_{\R^3}\phi(r)e^{-ik\. r}$.
\end{lemma}

\begin{proof}
The first identity in (\ref {lim11}) is due to the symmetry $F(r)=F(-r)$. Then
we compute the left hand side of (\ref {lim11}) taking the Fourier transform and passing in spherical coordinates. The result is
\begin{equation}
A\int_{S^2} d\hat k \delta (\hat k \cdot w) \hat k \otimes \hat k=a(w).  
\end{equation}
\end{proof}

Finally by the use of the dominated convergence theorem we can establish
\begin{equation}
\lim_{\var \to 0}\int d\tau {\mathcal T}_1^>(\tau)=\int_0^t d\tau S(t-\tau) Q_L( S(\tau)f_0, S(\tau)f_0)
\end{equation}
in ${\mathcal D}'$.
To conclude the proof it remains to show that
\begin{equation}\label{t2_0}
\lim_{\var \to 0}\int _0^t d\tau { \mathcal T}_2^>(\tau)=0.
\end{equation}
We first evaluate
\begin{equation}
\begin{split}
(h_\var-h&)_{\alpha,\beta} (x_1,r, v_1,v_2,\tau, \var s)=R_\alpha(x_1,v_1,\tau )\\
&\{[ (\nabla_{v_1}-\nabla_{v_2} )S(\tau-\var s)
f^0 _{2} ](X_2(-\var s),V_2(-\var s))-(\nabla_{v_1}-\nabla_{v_2} )S(\tau) f^0_2(x_1,x_1,v_1,v_2)]\}_\beta\,.
\end{split}
\end{equation}
Note that
$$
\nabla_v S(\tau) f(x,v)=S(\tau)(\nabla_v-\tau \nabla_x)f(x,v).
$$
Omitting irrelevant variables we observe that
$$
(h_\e-h)_{\alpha,\beta}=R_\alpha(\Phi_\beta(-\e s)-\Phi_\beta(0))
$$
where $\Phi(\sigma)=[(\nabla_{v_1}-\nabla_{v_2})S(\tau+\sigma)f_2^0](X_2(\sigma),V_2(\sigma))$.\\
Hence
$$
|h_\e-h|\leq |R|\int_{-\e s}^0 d\sigma |\dot{\Phi}(\sigma)|.
$$
It is easy to see that $\dot{\Phi}(\sigma)$ is a linear combination of various second derivatives of $f_2^0$, multiplied by $\dot{w}(\sigma)=\frac{2}{\sqrt \e}F\left(\frac{x_1(\sigma)-x_2(\sigma)}{\e}\right)$,
plus two terms proportional to first derivatives with respect to $x$. All the derivatives are computed at the point $[X_2(\sigma)-(\tau+\sigma)V_2(\sigma),V_2(\sigma)]$. Hence, under the assumptions of Theorem 1, we obtain
$$
|h_\e-h|\leq C|R|\frac{1}{\sqrt \e}\int_0^{\e s}d\sigma \exp\{-b (|v_1(-\sigma)|^2+ |v_2(-\sigma)|^2)\}.
$$
Since $|x_1(-\e s)-x_2(-\e s)|\geq \e$ if $s\geq 4/|w|$, the integral over $ds$ in \eqref{t2} can be estimated by
\begin{equation}\label{F}
\begin{split}
\left| \int_0 ^{\tau/\e} ds  F_\beta\left(\frac{x_1(-\e s)-x_2(-\e s)}{\e}\right)(h_\e-h)_{\alpha,\beta}\right|&\leq \frac{C|R|}{\sqrt\e}\int_0^{4/|w|}ds\int_0 ^{\e s}d\sigma \psi(\sigma)\\
& \leq \frac{C|R|\sqrt\e}{|w|}\int_0 ^{4/|w|} d\sigma \psi(\e\sigma)\;,
\end{split}
\end{equation}
where $\psi(\sigma)=\exp\{-b(|v_1(-\sigma)|^2+ |v_2(-\sigma)|^2)\}.$ 

Then by energy conservation
$$
|v_1(t)|^2+|v_2(t)|^2+2\sqrt\e\phi\left(\frac{x_1(t)-x_2(t)}{\e}\right)=const
$$
and therefore
$$
\psi(\e s)\leq A^2 \exp\{-b(|v_1|^2+|v_2|^2-4\sqrt\e \|\phi\|_\infty)\}.
$$
Hence, we obtain the following estimate of the integral \eqref{t2}:
\begin{equation*}
\begin{split}
|\mathcal T_2^>(\tau)| \leq C\sqrt\e \int dx_1 \int dr \int dV_2\, |v_1-v_2|^{-2}|R(x_1,v_1,\tau)|\,|F(r)|\, \exp\{-b(|v_1|^2+|v_2|^2)\},
\end{split}
\end{equation*}
where $R(x_1,v_1,\tau)$ is given in \eqref{test-f}. It is clear that $R(x_1,v_1,\tau)=0$ if $|x_1|>R_1$, where $R_1$ depends only on $\tau$. Therefore
$$
|\mathcal T_2^>(\tau)|\leq C\sqrt\e \int dv_1\int dv_2 |v_1-v_2|^{-2}\exp\{-b(|v_1|^2+|v_2|^2)\}=C_1\sqrt\e.
$$
 By Lemma 1 we also conclude that  \eqref{resultgamma}   holds and this completes the proof of Theorem 1.  
 \end{proof}


\section{Propagation of chaos}
\label{sec:3}

In this section we extend the result obtained in Theorem 1 to the $j$-marginal distribution, showing the propagation of chaos (at first order in time). More precisely we have

\begin{theorem}
Under  hypotheses of Theorem 1, if \eqref{chaos} holds for all j, then
\begin{equation}\label{thm2}
\begin{split}
\lim_{\e\rightarrow 0}\tilde{g}_j ^N (t, x_1,v_1,\dots,x_j,v_j)&=\prod_{i=1}^j S(t)f_0(x_i,v_i)+\\
&+\sum_{i=1}^j \prod_{\substack{k=1\\k\neq i}}^jS(t)f_0(x_k,v_k)\int_0^t d\tau S(t-\tau)Q_L(S(\tau)f_0,S(\tau)f_0)(x_i,v_i),
\end{split}
\end{equation}
\begin{equation}
\label{thm2gamma}
\lim_{\e\rightarrow 0}\tilde{\gamma}_j ^N (t, x_1,v_1,\dots,x_j,v_j)=0
\end{equation}
in $\mathcal{D'}$.
\end{theorem}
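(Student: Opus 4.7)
The plan is to parallel the proof of Theorem 1 for arbitrary $j$, exploiting the factorization $f_j^0 = f_0^{\otimes j}$. The free-flow term contributes $S(t)f_j^0 = \prod_{i=1}^j S(t)f_0(x_i,v_i)$, which is exactly the first summand on the right-hand side of \eqref{thm2}. The ``direct'' collisional contribution $\frac{N-j}{\sqrt\e}\int_0^t S(t-\tau)C_{j+1}^\e S(\tau)f_{j+1}^0\,d\tau$ is negligible by the same Taylor argument as in the case $j=1$: using $\int dr\,F(r)=0$, $C_{j+1}^\e$ acting on a smooth function produces an $O(\e^4)$ quantity, which combined with the prefactor $(N-j)/\sqrt\e = O(\e^{-7/2})$ yields an error of order $\e^{1/2}$.

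The heart of the proof is the analysis of the memory term $\frac{N-j}{\sqrt\e}\int_0^t S(t-\tau)C_{j+1}^\e\tilde\gamma_{j+1}^N(\tau)\,d\tau$. First I would write $C_{j+1}^\e = \sum_{k=1}^j C_{k,j+1}^\e$ and expand $\tilde\gamma_{j+1}^N$ according to the formula following \eqref{gamma1} as a sum over pairs $1\leq i<l\leq j+1$. For each $k$, I separate the ``diagonal'' contribution, where the $\tilde\gamma$-pair is exactly $(k,j+1)$, from the ``off-diagonal'' remainder. For the diagonal piece, factorization of the initial data together with the observation that, in the weak-coupling scaling, only particles $k$ and $j+1$ interact nontrivially while the remaining $j-1$ spectators move along the free flow reduces the computation to the two-particle calculation already performed in Theorem 1. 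The spectators produce factors $S(\tau)f_0(x_m,v_m)$ for $m\neq k$, which after the action of the outer $S(t-\tau)$ combine to $\prod_{m\neq k} S(t)f_0(x_m,v_m)$, and the interacting pair reproduces $\int_0^t d\tau\, S(t-\tau)Q_L(S(\tau)f_0,S(\tau)f_0)(x_k,v_k)$. Summing over $k=1,\dots,j$ reconstructs exactly the second sum in \eqref{thm2}.

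The off-diagonal contributions split into two types. Pairs $(k',j+1)$ with $k'\neq k$ force the backward trajectories of particles $k'$ and $j+1$ (with $x_{j+1}$ pinned within $O(\e)$ of $x_k$) to meet inside the interaction disk at some time in $[0,\tau]$; this ``recollision'' condition restricts the initial data to a set of measure vanishing with $\e$, by an argument analogous to Lemma 3 applied to the pair $(k,k')$. Pairs $(i,l)\subset\{1,\dots,j\}$ not involving $j+1$ generate a $\tilde\gamma$-factor whose leading behavior is independent of $x_{j+1}$, so the $x_{j+1}$-integration against $F((x_k-x_{j+1})/\e)$ produces, by Taylor expansion and $\int F=0$, an additional $O(\e^4)$ gain that easily beats the $\e^{-7/2}$ and $\e^{-1/2}$ prefactors. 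The statement \eqref{thm2gamma} then follows from Lemma 1: since $\tilde\gamma_j^N(t) = U_j(t)f_j^0 - S(t)f_j^0$, its pairing with a smooth test function is supported on configurations where at least one pair of particles comes within distance $\e$ during $[0,t]$, a set of measure $O(\e)$.

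The main obstacle will be the rigorous handling of the off-diagonal recollision terms and the clean extraction of the effective two-body interaction from the full $(j+1)$-body backward flow in the diagonal piece. Both rely on the essentially one-pair-at-a-time character of the weak-coupling dynamics: since the interaction range is $O(\e)$ while spectators sit at macroscopic separation, genuine three-body configurations occupy a set of measure $o(1)$, so the analysis collapses to the two-body calculation of Theorem 1 summed over the $j$ possible choices of the ``active'' particle.
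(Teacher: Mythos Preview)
Your overall strategy coincides with the paper's: isolate the diagonal contribution (where the $\tilde\gamma$-pair is $(k,j+1)$, $k$ being the $C^\e$-index), reduce it to the two-body computation of Theorem 1 by showing spectators are effectively free, and argue that all off-diagonal pairs are negligible. The paper packages the entire off-diagonal analysis into a single geometric recollision lemma (Lemma 5).

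Your off-diagonal treatment has a gap. For pairs $(i,l)\subset\{1,\dots,j\}$ you invoke a Taylor/$\int F=0$ argument, claiming the $\tilde\gamma$-factor is independent of $x_{j+1}$ to leading order. This fails when the pair contains the $C$-index $k$: the inner force $F\bigl((x_k(-s)-x_l(-s))/\e\bigr)$ then depends on $x_{j+1}$ through the $(k,j+1)$ two-body flow, and this dependence is not smooth --- the argument of $F$ shifts by $O(s/(\sqrt\e\,|w|))$, so no Taylor expansion in $x_{j+1}$ is available. The paper handles precisely this case by a geometric dichotomy: either particle $l$ meets $k$ after the $(k,j+1)$ scattering has terminated (then the free-flight projection argument applies with the outgoing data), or during it (then $l$ must enter a ball of radius $\tfrac32\e$ around the freely moving center of mass of the pair $(k,j+1)$). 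Both alternatives give an $O(\e^2)$ phase-space constraint. Your power counting is also off: the gain $\e^4$ against the prefactor $(N-j)/\e\sim\e^{-4}$ yields only $O(1)$; the actual smallness comes from the $O(\e^2)$ constraint that the inner pair collides at all, not from the Taylor step. In short, all off-diagonal pairs --- including those entirely inside $\{1,\dots,j\}$ --- are ultimately controlled by the recollision mechanism, and the paper treats them on that common footing rather than via two separate devices.
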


\begin{remark}
The reason why we call eq.n \eqref{thm2} propagation of chaos is that the r.h.s of \eqref{thm2} corresponds to the first order in time of $\prod_{i=1}^j\lim _{\e \to 0} \tilde g_1^N(x_i,v_i,t)$.
\end{remark}

\begin{proof}
Let  $u\in \mathcal D(\R^{3j}\times\R^{3j})$ be a test function and let us consider
\begin{equation}\label{g2}
\begin{split}
(u,\tilde g_j ^N(t))=(u,S(t)f_j ^0)&+\frac{N-j}{\sqrt\e}\int_0 ^t (u,S(t-\tau)C^\e _{j+1}
S(\tau)f_{j+1} ^0) d\tau \\
 &+\frac{N-j}{\sqrt\e}\int_0 ^t (u,S(t-\tau)C^\e _{j+1} \tilde\gamma _{j+1}^N (\tau))d\tau.
\end{split}
\end{equation}
Of course  the second term in  \eqref{g2} is of order $O(\sqrt\e)$, hence we focus on the third term. Then, defining $R_i(X_j,V_j,\tau):=\nabla_{v_i}S(\tau-t) u(X_j,V_j)$, such a term is 
\begin{equation}
\begin{split}
\label{T}
T=& \frac{N-j}{\sqrt\e}\int_0 ^t (u,S(t-\tau)C^\e _ {j+1}(\tau)\tilde \gamma_{j+1} ^N) d\tau \\ 
&=-\frac{N-j}{\sqrt\e}\sum_{i=1} ^j \int_0 ^t d\tau \int dX_{j+1}\int dV_{j+1}\, R_i(X_j,V_j,\tau)\. F(\frac{x_i-x_{j+1}}{\e})\tilde\gamma _{j+1} ^N(\tau) \\ 
&=-\frac{N-j}{\e}\sum_{i=1} ^j \int_0 ^t d\tau \int dX_{j+1}\int dV_{j+1} \int_0 ^\tau ds R_i(X_j,V_j,\tau) \sum_{\substack{k,l=1\\ k< l}}^{j+1}F\left(\frac{x_i-x_{j+1}}{\e}\right)\\
&\ \ \ \ \{F\left(\frac{x_k(-s)-x_l(-s)}{\e}\right)\.[(\nabla_{v_k}-\nabla_{v_l})S(\tau-s)f_{j+1} ^0](X_{j+1}(-s),V_{j+1}(-s) )\}. \\
\end{split}
\end{equation}

We shall see that the leading term in the sum appearing in the r.h.s. of \eqref {T} is that with $k=i, l=j+1 $, the other ones being vanishing. This is the content of the following

\begin{lemma}
Let $\varphi=\varphi (X_{j+1},V_{j+1}, \tau, s)\geq 0$ be a measurable function, compactly supported in $X_{j+1}$ and such that
$$
\varphi \leq e^{-b |V_{j+1}|^2}.
$$
Then, if $(k,l) \neq (i,j+1)$, for all $i,k,l$, we have
\begin{equation}
\label{T1}
\frac{N-j}{\e}\int _0^{\tau} ds \int dX_{j+1}\int dV_{j+1} \varphi 
|F\left(\frac{x_i-x_{j+1}}{\e}\right)| |F\left(\frac{x_k(-s)-x_l(-s)}{\e}\right) | \leq
C^j \e\,.
\end{equation}
\end{lemma}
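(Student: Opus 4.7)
The budget is easy to count: $(N-j)/\e=O(\e^{-4})$ under $N\e^3=1$, so the bound $C^j\e$ will follow once the rest of the integral is shown to be $O(\e^5)$. Because $(k,l)\neq(i,j+1)$, the two force factors $|F(\tfrac{x_i-x_{j+1}}{\e})|$ and $|F(\tfrac{x_k(-s)-x_l(-s)}{\e})|$ localize two \emph{distinct} particle pairs to within distance $\e$, and my plan is to absorb an $\e^3$ volume factor from each via independent phase-space changes of variables; the $s$-integration over $[0,\tau]$ together with the Gaussian/compactly-supported weight $\varphi$ will then contribute only the harmless $C^j\tau$ factor, so that the total estimate becomes $(N-j)\e^{-1}\cdot\e^3\cdot\e^3\cdot C^j\tau = O(C^j\e^2)\leq C^j\e$ for small $\e$. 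The factor $C^j$ itself tracks the Gaussian normalization $\int e^{-b|V_{j+1}|^2}\,dV_{j+1}$ and the volume of the $X_{j+1}$-support of $\varphi$.

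Concretely, I would first apply the change of variable $x_{j+1}=x_i-\e r_1$, whose Jacobian $\e^3$ converts $|F(\tfrac{x_i-x_{j+1}}{\e})|$ into the integrable kernel $|F(r_1)|$. For the second $F$ I would exploit the Liouville theorem for the full $(j+1)$-body Hamiltonian flow: the substitution in which $(X_{j+1},V_{j+1})$ is the forward-flowed image at time $s$ of new variables $(X',V')$ has unit Jacobian and, because $(X_{j+1}(-s),V_{j+1}(-s))=(X',V')$, converts the dynamic factor $|F(\tfrac{x_k(-s)-x_l(-s)}{\e})|$ into the purely kinematic $|F(\tfrac{x_k'-x_l'}{\e})|$; a further change $x_l'=x_k'-\e r_2$ then absorbs the second $\e^3$. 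Energy conservation of the rescaled Hamiltonian (with potential $\sqrt\e\,\phi$) gives $|V_{j+1}|^2\geq|V'|^2-C\sqrt\e$, so the weight $\varphi\leq e^{-b|V_{j+1}|^2}$ transfers to $\leq Ce^{-b|V'|^2}$ in the new variables, and the $X_{j+1}$-compact support of $\varphi$ transfers to a bounded $X'$-support by the straightforward displacement bound $|X'-X_{j+1}|\leq \tau\sup|V|$ on the finite-time flow.

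The main technical obstacle is that the two changes of variable act on overlapping coordinates and must be applied in a consistent order. The cleanest route is to apply the Liouville map \emph{first}, which renders the second $F$ static but turns the first $F$ into the dynamic factor $|F(\tfrac{x_i(s;X',V')-x_{j+1}(s;X',V')}{\e})|$. To extract the required $\e^3$ from this residual dynamic factor I would adapt the argument of Lemma~3 exactly as in the proof of Theorem~1: split the integration region according to whether $|v_i-v_{j+1}|$ exceeds the threshold $a\e^{1/4}$; in the low-velocity regime the velocity volume contributes a small $\e^{3/4}$ factor, while in the high-velocity regime the two-body interaction time is $O(\e/|v_i-v_{j+1}|)$ and a further kinematic change of variable absorbs the $\e^3$ with an integrable $1/|v_i-v_{j+1}|$ singularity in three dimensions. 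In every case the extra cost is at worst $\e^{1/4}$ or a logarithmic factor, so the final total is still $O(C^j\e^{5/4})\leq C^j\e$, yielding the stated bound.
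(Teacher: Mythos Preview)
Your overall counting is right — the goal is to extract $\e^5$ from the spatial integrations against the two force factors — but the last step, where you invoke ``the argument of Lemma~3 exactly as in the proof of Theorem~1'' to control the residual dynamic factor $\bigl|F\bigl(\tfrac{x_i(s)-x_{j+1}(s)}{\e}\bigr)\bigr|$, is where the real difficulty sits, and you have not addressed it. Lemma~3 bounds the interaction time for the \emph{two-body} flow; after your Liouville change the forward evolution of the pair $(i,j+1)$ is the full $(j{+}1)$-body flow from $(X',V')$, and nothing prevents particle $i$ or $j+1$ from interacting with a third particle before (or while) they meet each other. This is not hypothetical: your own $r_2$-substitution forces $|x_k'-x_l'|<\e$ at the new initial time, so whenever $(k,l)$ shares an index with $(i,j+1)$ — say $k=i$ — particle $i$ is already in contact with particle $l$ at time zero, and there is no reason for the pair $(i,j+1)$ to obey a two-body interaction-time bound. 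Your scheme has merely transferred the dynamic factor from the $(k,l)$ side to the $(i,j+1)$ side; the geometric obstruction is unchanged.

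The paper proceeds differently and avoids this circularity. It keeps the first $F$ static by the direct change $x_{j+1}=x_i-\e r$ (gaining $\e^3$) and, instead of trying to squeeze a second $\e^3$ out of the dynamic factor, extracts only $\e^2$ from it by a phase-space argument: the integration domain is partitioned according to which pair in $C_0$ is the \emph{first} to interact. Up to that first contact the pair $(k,l)$ moves freely, so the requirement that it come within distance $\e$ forces the initial relative position to lie in a tube of cross-section $O(\e^2)$ about the relative-velocity axis. When $(k,l)$ shares an index with $(i,j+1)$ the free motion of the pair is lost, and the paper replaces it by the free motion of the \emph{center of mass} of the $(i,j+1)$ system, which is isolated until the third particle arrives; condition~\eqref{cm} then plays the role of the tube estimate. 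With the $s$-integral bounded trivially by $\tau$, the count is $\e^{-4}\cdot\e^3\cdot\e^2\cdot\tau=O(\e)$. To make your route work you would have to supply exactly this first-collision decomposition and the center-of-mass trick on the $(i,j+1)$ side — that is the content you are missing.
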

\begin{proof}
We are integrating on the final coordinates $(X_{j+1}, V_{j+1})=(X_{j+1}(0), V_{j+1}(0) ) $ of the flow  $(X_{j+1}(\sigma), V_{j+1} (\sigma) )$ defined for negative times 
$\sigma \in [-\tau,0]$. We find convenient to reverse the velocities $V_{j+1} \to -V_{j+1}$ and look at positive times  $s \in [0, \tau]$.

First of all we perform the usual change of variables $x_{j+1}=x_i-\e r$ and gain $\e^3$. Next we introduce the following partition of the phase space:
setting $C_0= \{ (k,l), k<l | (k,l)\neq (i,j+1)\} $ we define
\begin{equation}
A_0 (k,l)=\{ (X_{j+1},V_{j+1}) | \quad |x_k-x_l|< 2 \e , (k,l)\in C_0 \}
\end{equation}
and
\begin{equation}
A_0= \bigcup_{(k,l) \in C_0}  A_0(k,l).
\end{equation}

Furthermore, denoting by $s(k,l) \in [0,\tau ] $ the first instant for which 
\begin{equation}
 |x_k(s) -x_l(s) |<  \e 
\end{equation}
namely the pair of particles $k$ and $l$ starts to interact at time $s(k,l)$ (if they do not interact we set $s(k,l)=\tau$) 
we define:
\begin{equation}
A_{k,l}=\{ (X_{j+1},V_{j+1}) \notin A_0  | s(k,l) =\min_ {(r,m) \in C_0} s(r,m)< \tau \}.
\end{equation}
In other words if $ (X_{j+1},V_{j+1}) \in A_{k,l}$ the pair of particles $(k,l)\in C_0$ is the first interacting pair 
(excluded the pair $(i,j+1)$ which starts to interact at time $0$) in the time interval $(0,\tau]$.

Note that we are interested to integrate over the set
\begin{equation}
\label{set}
A_0\cup  \bigcup_{(k,l) \in C_0}  A_{k,l}.
\end{equation}
In facts in the complement of the set  \eqref{set}, \eqref{T1} vanishes because
$$
|F\left(\frac{x_k(s)-x_l(s)}{\e}\right) | =0.
$$

To estimate the contribution due to $A_{k,l}$ we first assume that $k\neq i, l\neq j+1,i $.

Note that the motion of the pair of particles with indices $(k,l)$ is free in $[0,s(k,l)]$. Then setting
$x_k-x_l=y$ and $v_k-v_l=w$ we have
\begin{equation}
\label{con}
\inf _{s\in [0,\tau] } |y-ws| \leq \e.
\end{equation}
The minimizing $s$ is $s_0=\frac {w\cdot y}{|w|^2}$ so that  condition \eqref{con} yields
\begin{equation}
|y-w \frac {w\cdot y}{|w|^2}| \leq \e.
\end{equation}
This means that the projection of $y$ on the orthogonal plane to $w$ is in the disk smaller than $\e$. Therefore
\begin{equation}
\label{T2}
\frac{N-j}{\e} \int_{A(k,l)}  dX_{j+1} dV_{j+1} \, \varphi \,
|F\left(\frac{x_i-x_{j+1}}{\e}\right)| |F\left(\frac{x_k(-s)-x_l(-s)}{\e}\right) | \leq
C^j \e\,.
\end{equation}

Now we consider the cases  $k=i$, $l=i$ or $l=j+1$. For the sake of clearness we consider  $k=i$,  the other cases  being
completely analogous.

There are two possibilities: either $s(i,l)> \tilde s$, where $\tilde s$ is the last interaction time for the pair
$(i,j+1)$, namely
$$
|x_i(s)-x_{j+1}(s) | >\e
$$
for $s>\tilde s$, or  $s(i,l)\leq  \tilde s$.\\
In the first  case we can repeat the above argument  setting $y=x_i(\tilde s)-x_l(\tilde s)$ and
$w=v_i(\tilde s)-v_l(\tilde s)$. \\
In the second one observe that the center of mass $\bar x=\frac {x_i+x_{j+1}}{2}$ is moving freely with velocity 
$\bar v=\frac {v_i+v_{j+1}}{2}$ (because the pair $(i,j+1)$ is an isolated system at least up to a time $\bar s=s(i,l)$).

Condition
$$
|x_i(\bar s)-x_{l}(\bar s) | = \e
$$
implies
\begin{equation}
\label{cm}
|x_l(\bar s)-\bar x(\bar s) | \leq |x_i(\bar s)-x_{l}(\bar s) | + |x_i(\bar s)-\bar x(\bar s) | \leq \frac {3} {2}  \e\;.
\end{equation}
Therefore we can integrate under the condition \eqref{cm} to get 
\begin{equation}
\label{T3}
\frac{N-j}{\e} \int_{A(i,l)}  dX_{j+1} dV_{j+1} \, \varphi \,
|F\left(\frac{x_i-x_{j+1}}{\e}\right)| |F\left(\frac{x_k(-s)-x_l(-s)}{\e}\right) | \leq
C^j \e\,.
\end{equation}

Clearly we also have that
\begin{equation}
\label{T4}
\frac{N-j}{\e} \int_{A_0 }  dX_{j+1} dV_{j+1} \, \varphi \,
|F\left(\frac{x_i-x_{j+1}}{\e}\right)| |F\left(\frac{x_k(-s)-x_l(-s)}{\e}\right) | \leq
C^j \e^2\,.
\end{equation}
Thus we conclude the proof.  
\end{proof}
Finally we handle the leading term. Setting
\begin{equation}
\begin{split}
\label{Tl}
T_l=& -\frac{N-j}{\e}\sum_{i=1} ^j \int_0 ^t d\tau \int dX_{j+1}\int dV_{j+1} \int_0 ^\tau ds R_i(X_j,V_j,\tau)F\left(\frac{x_i-x_{j+1}}{\e}\right)\\
&\left\{F\left(\frac{x_i(-s)-x_{j+1}(-s)}{\e}\right)\.
[(\nabla_{v_k}-\nabla_{v_l})S(\tau-s)f_{j+1} ^0](X_{j+1}(-s) ,V_{j+1}(-s) )\right\},
\end{split}
\end{equation}
we have

\begin{lemma}
The term with repeated indices is of order one.
More precisely,
\begin{equation}
\label{result2}
\begin{split}
\lim_{\e\rightarrow 0} T_l =\left ( \sum_{i=1}^j \prod_{\substack{k=1\\k\neq i}}^jS(t)f_0(x_k,v_k)\int_0^t d\tau S(t-\tau)Q_L(S(\tau)f_0,S(\tau)f_0)(x_i,v_i),u\right).\end{split}
\end{equation}
\end{lemma}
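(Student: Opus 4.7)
My plan is to reduce the analysis of $T_l$ to the two-particle argument already carried out in the proof of Theorem 1, plus a chaos factorization for the $j-1$ spectator variables. Fix $i\in\{1,\ldots,j\}$ and focus on the corresponding summand in \eqref{Tl}; after the change of variable $x_{j+1} = x_i - \e r$ the prefactor becomes $(N-j)\e^3 = 1 - O(\e^3)$, matching the scaling of Theorem 1. The crucial reduction is that, inside the evaluation $[(\nabla_{v_i}-\nabla_{v_{j+1}}) S(\tau-s) f_{j+1}^0](X_{j+1}(-s), V_{j+1}(-s))$, the full $(j+1)$-particle flow can be replaced by the two-body flow of the pair $(i,j+1)$ superimposed on the free flow of the remaining $j-1$ spectators. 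This is justified by the same partition of phase space used in the proof of the preceding lemma: one splits configurations according to whether some spectator comes within distance $\e$ of a member of the pair $(i,j+1)$ or of another spectator during $[0,\tau]$; projecting each relative displacement onto the plane orthogonal to its relative velocity, together with the Gaussian decay \eqref{phi-property}, shows that the exceptional set contributes $O(\e)$.

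Next I would exploit the chaos hypothesis $f_{j+1}^0 = f_0^{\otimes(j+1)}$ together with the commutation $\nabla_v S(\tau) f = S(\tau)(\nabla_v - \tau \nabla_x) f$. Since $\nabla_{v_i} - \nabla_{v_{j+1}}$ touches only the $i$-th and $(j+1)$-th factors, and the spectators move freely so that $x_k(-s) - v_k(\tau-s) = x_k - v_k \tau$ independently of $s$, the spectator factor $\prod_{k\neq i}^{j} S(\tau) f_0(x_k,v_k)$ pulls out of the $ds$-integral and combines directly with $R_i(X_j,V_j,\tau)$. What remains is a two-particle integral in $(x_i, r, v_i, v_{j+1})$ which is formally identical to the expression analyzed as $\mathcal T_\e^>$ in the proof of Theorem 1, with the pair $(1,2)$ relabelled as $(i,j+1)$.

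At this point I apply the Theorem 1 machinery verbatim: split the $v_{j+1}$-integral according to whether $|v_i - v_{j+1}|$ is greater or smaller than $a \e^{1/4}$, discarding the small-relative-velocity part by Lemma 2; use Lemma 3 to replace the two-body trajectory by the rectilinear relative motion $r - (v_i - v_{j+1}) s$, with an error of order $\sqrt\e \,|\log \e|$ as in the treatment of $\mathcal T_1^>$ and $\mathcal T_2^>$; and apply Lemma 4 to identify $\lim_{\e \to 0} \int dr \int_0^{\tau/\e} ds\, F_\alpha(r) F_\beta(r - ws) = a(w)_{\alpha,\beta}$. An integration by parts in $v_i$ then produces the operator $\nabla_{v_i} \cdot [a(v_i-v_{j+1})(\nabla_{v_i} - \nabla_{v_{j+1}}) S(\tau) f_0(x_i, v_i) S(\tau) f_0(x_i, v_{j+1})]$, whose integral in $v_{j+1}$ coincides with $Q_L(S(\tau) f_0, S(\tau) f_0)(x_i, v_i)$. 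Summing over $i$ and pairing against the test function and the spectator product gives \eqref{result2}.

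The main obstacle, and the only place where genuinely new work is required with respect to Theorem 1, is the spectator-linearization step in the first paragraph: one must control, uniformly in $j$, the phase-space measure of configurations in which some spectator comes within $\e$ of another particle during $[0,\tau]$. This is the natural multi-particle extension of the preceding lemma and is handled by the same geometric argument applied pairwise; once this bookkeeping is done, the remainder of the proof is essentially a line-by-line transcription of the analysis of Section \ref{sec:2} carrying an extra spectator factor through.
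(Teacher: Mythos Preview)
Your proposal is correct and follows essentially the same route as the paper's own proof, which is given only as a sketch: restrict to $|v_i-v_{j+1}|>a\e^{1/4}$, invoke Lemma~5 to decouple the spectators from the interacting pair $(i,j+1)$, and then repeat verbatim the two-particle analysis of Section~\ref{sec:2} with the factorized spectator product carried along. Your write-up is in fact more detailed than the paper's, which simply says that once the spectator interactions are shown to be negligible by Lemma~5 the result ``follows easily'' from Section~\ref{sec:2}.
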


\begin{proof}
At this point the proof is rather obvious and we only sketch it. We first reduce the integration domain in the definition of $T_l$ for moderately  large relative velocity, i.e. $|v_i-v_{j+1}|> a\e^{1/4} $,
being the contribution of the complementary set negligible as we have seen in Section 3.
Looking at
\begin{equation}
\begin{split}
\label{Tl>}
T_l^> =&- \frac{N-j}{\e}\sum_{i=1} ^j \int_0 ^t d\tau \int dX_{j+1}\int_{|v_i-v_{j+1}|> a\e^{1/4}} dV_{j+1} \int_0 ^\tau ds R_i(X_j,V_j,\tau)\.\\ 
& F\left(\frac{x_i-x_{j+1}}{\e}\right)\left\{F\left(\frac{x_i(-s)-x_{j+1}(-s)}{\e}\right)\.
[(\nabla_{v_k}-\nabla_{v_l})S(\tau-s)f_{j+1} ^0](X_{j+1},V_{j+1})\right\},
\end{split}
\end{equation}
we could apply the same argument as in Section 3 to get the result, if the motion of the pair of particles $i$ and $j+1$ would be independent of the others. However we have seen in the proof of Lemma 5 that the contribution of the event in which the particle $k\neq i,j+1$ interacts with particle $i$ or particle $j+1$ is indeed negligible.  Hence \eqref{result2} follows easily.  
\end{proof}
Finally, again by Lemma 1, we obtain \eqref{thm2gamma}.
\end{proof}

\noindent \textbf{Acknowledgements\\} 
We thank the referee for having pointed out useful suggestions to improve the exposition of a previous version of this paper.\\
The support from Swedish Research Council (VR grant 621-2009-5751) for A.V.B. is gratefully acknowledged.

%
%

\end{document}